\newcommand{\argmax}{\operatornamewithlimits{argmax}}
\newcommand{\ud}{\mathrm{d}}
\newtheorem{theorem}{Theorem}
\newtheorem{proposition}{Proposition}
\newtheorem{remark}{Remark}
\newcolumntype{H}{>{\setbox0=\hbox\bgroup}c<{\egroup}@{}}
\begin{document}
%
\title{Generative Maximum Entropy Learning for Multiclass Classification}
\author{\IEEEauthorblockN{Ambedkar Dukkipati, Gaurav Pandey, Debarghya Ghoshdastidar, Paramita Koley, D. M. V. Satya Sriram}
\IEEEauthorblockA{Department of Computer Science and Automation\\
        Indian Institute of Science \\
        Bangalore, 560012, India\\
Email:\{ad, gp88, debarghya.g, paramita2000, dmvsriram\}@csa.iisc.ernet.in}}

\maketitle

\begin{abstract}
Maximum entropy approach to classification is very well studied in
applied statistics and machine learning and almost all the methods
that exists in literature are discriminative in nature. In this paper,
we introduce a maximum entropy classification method with feature
selection for large dimensional 
data such as text datasets that is generative in nature. 
To tackle the curse of dimensionality of large data sets,
we employ conditional independence assumption
(Naive Bayes) and we perform feature selection simultaneously, by
enforcing a `maximum  discrimination' between estimated class
conditional densities. For two class problems, in the proposed method,
we use Jeffreys ($J$) divergence to discriminate the class conditional
densities. To extend our method to the multi-class case, we propose a
completely new approach by considering a multi-distribution
divergence: we replace Jeffreys divergence by Jensen-Shannon ($JS$)
divergence  to discriminate conditional densities of multiple
classes. In order to reduce computational complexity, we employ a
modified Jensen-Shannon divergence ($JS_{GM}$), based on AM-GM
inequality. We
show that the resulting divergence is a natural  
generalization of Jeffreys divergence to a multiple distributions
case.
As far as the theoretical justifications are concerned
we show that when one intends to select the best features in a
generative maximum entropy approach, maximum discrimination using
$J-$divergence emerges naturally in binary classification.
Performance and comparative study of the proposed algorithms have been
demonstrated on large dimensional text and gene expression datasets that show our
methods scale up very well with large dimensional datasets.
\end{abstract}

\begin{IEEEkeywords}
Maximum Entropy; Jefferys Divergence; Jensen-Shannon Divergence; Text categorization;
\end{IEEEkeywords}
\IEEEpeerreviewmaketitle
\section{Introduction}
Broadly, supervised learning can be divided into discriminative
learning and generative
learning~\cite{NgJordan:2002:OnDiscriminativeVsGenerativeClassifiers}. In
the generative approach to classification the aim is to model the   
joint distribution of the data and the class labels  from the
training data. One can then compute the class conditional densities
for each class and then assign  the instance to the class with
highest posterior probability. Examples of generative classifiers include
linear discriminant analysis (LDA) and Bayes
classifier.

On the other hand, one can directly model a discriminating function
without actually constructing a model for the data. A discriminating
function may be chosen so as to minimize some measure of error on the
training data. Such an approach is termed as a discriminative
classification. Statistical discriminative classification, a subclass
of discriminative   classification, models the posterior density
directly, which is then used to classify a new instance. Examples of   
discriminative classifiers include logistic Regression and
support vector machines (SVM).

The generative classifiers have a smaller variance 
than their discriminative counterparts, and hence require lesser data
for training~\cite{NgJordan:2002:OnDiscriminativeVsGenerativeClassifiers}
to achieve their asymptotic error. This is in contrast to discriminative models, 
which tend to overfit the data, when the number of training instances is small.
Furthermore, incomplete data and latent variables can be taken care of in
generative models. Moreover, since generative models model the generation 
of the entire data (hidden as well as observed), one can incorporate 
complex dependencies between data/features in the model, thereby allowing
the construction of models that are closer to the true data generating mechanism.
Any domain knowledge about the data-generating mechanism can be incorporated in
a generative model quite easily. Lastly, generative models are 
more intuitive to understand than their discriminative counterparts. 

After the works
of~\cite{Jaynes:1957:InformationTheoryStatisticalMechanics_I,Kullback:1959:InformationTheoryAndStatistics},   
a variety of statistical methods and machine learning techniques have 
taken up ideas from information theory
\cite{ZhuWuMumford:1997:MinimaxEntropyPrincipleAndItsApplication,DudikPhillipsSchapire:2007:MaximumEntropyDensityEstimation}.     
Maximum entropy or minimum divergence methods form a subclass of such
techniques where the aim is to make minimum  
assumption about the data. These approaches proved to be more useful
in the field 
of natural language processing and text
classification~\cite{BeefermanBergerLafferty:1999:StatisticalModelsForTextSegmentation,NigamLaffertyMcCallum:1999:UsingMaximumEntropyForTextClassification},    
where the curse of dimensionality becomes more significant. Variants 
of maximum entropy techniques have also been considered  
in literature. For instance,
regularized maximum
entropy models have been considered
in~\cite{DudikPhillipsSchapire:2007:MaximumEntropyDensityEstimation}.

In spite of the vast literature surrounding maximum entropy models,
almost all classification methods 
considered are discriminative in nature. The reason for this is that
the partition function for most generative models  
cannot be obtained in closed form. However, as mentioned earlier, when
the number of training instances is small,
a discriminative model tends to overfit the data, and hence, a
generative model must be preferred. Hence, we explore  
generative maximum entropy models in this paper and compare it with
other discriminative methods. The use of a generative model  
also allows us to incorporate feature selection simultaneously.

\subsection*{Contributions}
We propose a new method of
classification using a generative model to estimate the class
conditional densities.  Furthermore, we perform feature selection using a
discriminative criteria based on Jeffreys divergence. We
call this method as Maximum Entropy with Maximum Discrimination~(MeMd).   
The basic approach is based on the idea presented
in~\cite{DukkipatiYadavMurty:2010:MaximumEntropyModelBasedClassification}
that does not scale up for large datasets, and further, extension to
multi-class classification is not studied. We improve on the  time
complexity of the algorithm  in 
\cite{DukkipatiYadavMurty:2010:MaximumEntropyModelBasedClassification}
by assuming class  conditional independence of all the features. 

Most of the classification methods that are designed for binary case
can be extended  to multi-class case by formulating the problem as several 
binary classification
problems~\cite{AllweinShapireSinger:2000:ReducingMulticlassToBinary}. Among
these `one-vs-all' is well known and successfully applied to
SVMs~\cite{RifkinKlautau:2004:InDefenceOfOneVsAllClassification}. 
 
One of our main contributions in this paper is the unique way we
extend our binary classification method (MeMd) to multi-class case.  
The main idea is to use Jensen-Shannon divergence, which can be
naturally defined for more than two distributions  
(these divergences are known as multi-distribution
divergences~\cite{GarciaWilliamson:2012:DivergencesAndRisksForMulticlassExperiments}).  
To simplify the calculations, as well as the computational complexity,
we replace arithmetic mean in $JS-$divergences with geometric mean and
study performance of MeMd. We show that this leads to a
multi-distribution extension of Jeffreys
divergence.

We perform experimental study of the proposed method
on some large benchmark text datasets and show that the proposed method  
is able to do  drastic dimensionality reduction and also give good
accuracies comparable to SVMs and outperforms discriminative approaches.

\section{Preliminaries And Background}
\label{sec_preliminaries}
We dedicate this portion to set up the notations. The problem at hand
is that of classification, \textit{i.e.}, we are given a training data
set with class labels {$\{c_1, c_2, \ldots, c_M\}$}.  
In binary classification, we have $M=2$.
Consider each instance of the data to be 
of the form {$\mathbf{x} = (x_1, x_2, \ldots, x_d)$}, where {$x_i \in \mathcal{X}_i$} 
is the {$i^{th}$} feature of the data. So the input space is 
{$\mathcal{X} = \mathcal{X}_1 \times \mathcal{X}_2 \times \ldots
  \times \mathcal{X}_d$}. In this setting, our objective is to rank
the features according to their `discriminative' ability.  

A key step in the proposed algorithms is estimation of the class conditional densities,
denoted as {$P_{c_j}(.) := P(.|c_j)$}, for each each class. This step requires some statistics of 
the data in the form of expected values of certain functions 
{$\Gamma = \{\phi_1(\mathbf{x}), \phi_2(\mathbf{x}), \ldots, \phi_l(\mathbf{x})\}$},
where {$\phi_j, j=1,\ldots,l$}, is defined over the input space {$\mathcal{X}$}. 
These are often termed as \textit{feature
  functions}~\cite{DudikPhillipsSchapire:2007:MaximumEntropyDensityEstimation},
and should not be confused with the features.  
In fact, the feature functions can be chosen so that they result in
the moments of the individual  features, for example it can be of the
form {$\phi(\mathbf{x}) = x_i^k$}, whose expected value gives the
{$k^{th}$} moment of the {$i^{th}$} feature. 

Let $\mathbf{X} = (X_1, \ldots, X_d)$ be a random vector that takes values
from the set {$\mathcal{X}$}.
Suppose the only information (observations) available about the distribution
of $\mathbf{X}$ is in the form of expected values 
of the real-valued feature functions $\Gamma = \{ \phi_{1}(\mathbf{x}), \ldots, \phi_{l}(\mathbf{x})\}$. 
We therefore have, 
\begin{equation}\label{eq2.2.2}
 \mathsf{E}_{P} \left[ \phi_{r}(\mathbf{X})\right] =
 \int_{\mathcal{X}} \phi_r(\mathbf{x}) P(\mathbf{x}) \:
 \ud \mathbf{x} 
 \enspace,\:\:\: 
 r=1,2, \ldots, l,
\end{equation}  
where these expected values are assumed to be known. 
In the maximum
entropy approach to density estimation one choose the distribution of
the form 
\begin{equation}
\label{eq2.2.4}
P(\mathbf{x}) = \exp\left( -\lambda_{0}-\sum_{j=1}^{l}\lambda_j 
	  \phi_j(\mathbf{x})\right) \enspace, 
\end{equation}
      where $  \lambda_{0},\lambda_{1}, \ldots ,\lambda_{l}$ 
      are obtained by
      replacing~\eqref{eq2.2.4} in~\eqref{eq2.2.2} and
      $\int_{\mathcal{X}} P(\mathbf{x}) d\mathbf{x} = 1$. 
      This is known as the maximum entropy (ME) distribution.

In presence of an observed data $\{\mathbf{x}^{(k)}, k=1,\ldots, N\}$, 
       in the maximum entropy modeling one assumes  that 
       the expected values of the moment constraint functions
       \textit{i.e.}, 
       $\mathsf{E}_P\left[\phi_r(\mathbf{x})\right]$ can be
       approximated by observed statistics or sample means 
      of $\phi_r(\mathbf{x})$, $r=1,\ldots,
      l$~\cite{NigamLaffertyMcCallum:1999:UsingMaximumEntropyForTextClassification}. Therefore
      we set   
      \begin{equation}
	\mathsf{E}_P \left[ \phi_r (\mathbf{x})\right]  \approx
        \frac{1}{N}\sum_{k=1}^{N} \phi_r (\mathbf{x}^{(k)}) =
        \mu_{r}^{\mathrm{emp}},\:\:\: r=1,\ldots, l.
      \end{equation}

For the estimation of $\Lambda = (\lambda_0, \ldots, \lambda_l)$ one can
show that the maximum likelihood estimator
       $\Lambda^{'}$ is given by, 
       \begin{equation}\label{likelihood2}
 	\Lambda^{'} = \argmax_{\Lambda} \frac{1}{N}
 	\sum_{i=1}^{N}\ \ln P(\mathbf{x}^{(i)}; \Lambda, \Gamma)\enspace,
       \end{equation}
       where, $\mathbf{x}^{(1)}, \ldots,\mathbf{x}^{(N)}$ are assumed to be $i.i.d.$ samples
       drawn from a unknown distribution. To solve~\eqref{likelihood2}, one can use iterative methods like gradient
       descent method or iterative scaling methods
       \cite{DarrochRatcliff:1972:IterativeScaling,Csiszar:1989:GeometricInterpretation}.

Moving on to divergences,              
a symmetrized version of
$KL-$divergence~\cite{Kullback:1959:InformationTheoryAndStatistics}
is known as Jeffreys 
divergence~\cite{Jeffreys:1946:AnInvariantFormForThePriorProbability},
or simply $J-$divergence. Given two pdfs $P$ and $Q$, $J-$divergence
between them is defined as  
\begin{align}
\label{eq:jef}
J(P \parallel Q) &=  KL(P \parallel Q) + KL(Q \parallel P)	\nonumber
        \\&=  \int_{\mathcal{X}}( P(x)-Q(x) ) \ln \frac{P(x)}{Q(x)}
        \: \ud {x} \enspace.
\end{align}

Although $J-$divergence was proposed in the context of statistical
estimation problems to  provide measures of the discrepancy between two
  laws~\cite{Jeffreys:1946:AnInvariantFormForThePriorProbability}, 
its connection to $KL-$divergence has made it popular in  
classification
tasks~\cite{NishiiEguchi:2006:ImageClassificationJeffreysDivergence,DeselaersKeysersNey:2004:FeaturesForImageRetrieval}. 
Its relationship with other divergence measures have been studied
in~\cite{DragomirSundeBuse:2000:InequalitiesforJeffreys}. 
In fact, this $J-$divergence can also be obtained as a $f-$divergence~\cite{AliSilvey:1966:AgeneralClassOfCoefficinetsOfDivergence}
with a coupled convex function.   

Jensen-Shannon ($JS$) divergence appeared in the literature relatively
recently~\cite{Lin:1991:DivergenceMeasuresBasedOnTheShannonEntropy}, 
and the unique characteristic of this divergence is that one can
measure a divergence between more than two probability
distributions. Hence, one can term this as a multi-distribution
divergence.

Let $P_1, \ldots , P_M$ be probability distributions and let
$\overline{P} = \sum_{i=1}^M \pi_i P_i $ be a convex combination of
$P_1, \ldots ,P_M$, where $\pi_i \in [0,1]$, $i=1,\ldots,M$ and $\sum_{i=1}^{M} \pi_i = 1$.
Then $JS-$divergence (or, information radius) among $P_1, \ldots , P_M$ is
defined as 
\begin{equation}
\label{JSD}
 JS(P_1, \ldots ,P_M) = \sum_{i=1}^M \pi_i KL(P_i \parallel \overline{P}) \enspace.
\end{equation}

$JS-$divergence is non-negative, symmetric and bounded. For $k=2$, it
has been shown that it is  the square of a
metric~\cite{EndresSchindelin:2003:NewMetricForProbabilityDistributions}.
Grosse et al.\cite{Grosse:2002:AnalysisOfSymbolicSequences} studied several
interpretations and connections of $JS-$divergence. However, the close
association of this divergence to
classification~\cite{Lin:1991:DivergenceMeasuresBasedOnTheShannonEntropy}
makes it quite interesting in our work. 

\subsection*{Related work}
\noindent
Since the seminal work of
\cite{BergerPietraPietra:1996:AmaximumEntropyApproachToNaturalLanguageProcessing}, 
a plethora of maximum entropy techniques for learning have been
introduced in literature. For text classification, 
a simple discriminative maximum entropy technique was used in
\cite{NigamLaffertyMcCallum:1999:UsingMaximumEntropyForTextClassification} 
to estimate the posterior distribution of class variable conditioned
on training data.  
Each feature is assumed to be a function of the data and the class
label. By maximizing the conditional 
entropy of the class labels, we get a distribution of the form 
\begin{equation}
 P(c|x) = \frac{1}{Z(x)}\exp\left(\sum_{i=1}^m{\lambda_i f_i(x,c)}\right),
\end{equation}
where $Z$ is the normalizing constant. The parameters $\lambda_i, 1\le
i\le m$ are obtained by maximizing the conditional 
log likelihood function. It was shown that for some datasets,  
the discriminative maximum entropy model outperformed the more
commonly used multinomial naive Bayes model significantly. An
extension of this approach for sequential data is presented in
~\cite{Mccallum:2000:MaximumEntropyMarkovModels}. 


Now we proceed to the proposed generative maximum entropy
classification and feature selection method.  

\section{MeMd for Binary Classification}
\label{sec_maxEnt_j}
\subsection{Why maximum discrimination?} \label{sec:whymaxdisc}
 What is a `natural' way to select features when the intention is to do maximum entropy Bayes
 classification? In order to answer this question, we first explain what we mean by `natural' with the help of an
 example. For notational convenience, we assume that the classes are labeled as $+1$ and $-1$ in this subsection.
 We will revert to our original notations at the end of this subsection.

 Let us assume that our decision surfaces (or classification
 boundaries) are hyperplanes of the form 
 $f(\mathbf{x}) = \{\mathbf{w}^T \Phi(\mathbf{x}) + b: ||\mathbf{w}||=1 \}$,
 where $\Phi = (\phi_{1},\ldots,\phi_{l})$ is a vector of feature functions. Furthermore, assume that the weight
 vector $\mathbf{w}$ is known and the aim is to select a subset of feature functions. The obvious strategy to
 select features in such a scenario would be to select them so as to maximize either of the two quantities
 \begin{align*}
  \hat{\Phi}_{sum} &= \argmax_{\Phi} \sum_{i=1}^N
  y^{(i)}\left(\mathbf{w}^T \Phi(\mathbf{x}^{(i)}) + b\right) \enspace \\
  \hat{\Phi}_{min} &= \argmax_{\Phi} \min_{i=1}^N y^{(i)}\left(\mathbf{w}^T \Phi(\mathbf{x}^{(i)}) + b\right) ,
 \end{align*}
 where $\mathbf{x}^{(1)}, \mathbf{x}^{(2)} \ldots \mathbf{x}^{(N)}$ are $N$ $i.i.d$ samples that constitute the
 training data and $y^{(i)}$ are their corresponding class labels
 taking values in $\left\{+1, -1\right\}$. Here, the quantity
 $y^{(i)}(\mathbf{w}^T \Phi(\mathbf{x}^{(i)}) + b)$ measures the
 classification
 margin for the point $\mathbf{x}^{(i)}$. Ideally, we would like it to be as high as possible for all points.

 The question that we wish to address is whether we can follow a similar approach to select subset of features
 for the Bayes classifier. In Bayes classification, a point $\mathbf{x}$ is assigned the label $+1$ if
 \begin{equation*}
  \pi_+ P_+(\mathbf{x}) > \pi_- P_-(\mathbf{x}) ,
 \end{equation*}
 which can be rewritten as
 \begin{equation*}
 \log{\frac{\pi_+P_+(\mathbf{x})}{\pi_-P_-(\mathbf{x})}} > 0 .
 \end{equation*}

  Here $\pi_+, \pi_-$ are the prior probabilities and $P_+$ and $P_-$ denote the class conditional
  probabilities of the two classes.
 Hence, the above quantity plays the same role as the equation of
 hyperplane in the former case. Therefore, we can define 
 our Bayes' classification margin for the point $\mathbf{x}$ as
 $y\log{\frac{\pi_+P_+(\mathbf{x})}{\pi_-P_-(\mathbf{x})}}$
 which must be positive for all correctly classified points. As in the case of hyperplanes, we can select features
 so as to maximize either of the two quantities.
 \begin{align}
  {\Gamma}^*_{sum} &=  \underset{S \in 2^{\Gamma}}{\text{arg max}}\sum_{i=1}^N  y^{(i)}\log{\frac{\pi_+P_+(\mathbf{x}^{(i)};S)}{\pi_-P_-(\mathbf{x}^{(i)};S)}} \label{eq:max_sum}\\
  {\Gamma}^*_{min} &=  \underset{S \in 2^{\Gamma}}{\text{arg max}}\min_{i=1}^N y^{(i)}\log{\frac{\pi_+P_+(\mathbf{x}^{(i)};S)}{\pi_-P_-(\mathbf{x}^{(i)};S)}} \notag,
 \end{align}
 where $\Gamma$ is the set of all feature functions and $S \subset \Gamma$.

 If the class conditional distributions are obtained using maximum entropy by using the expected values of feature functions in S,
 we can further simplify \eqref{eq:max_sum} by plugging in the equation for maximum entropy
 distribution for the two classes. The corresponding feature selection problem then becomes
 \begin{align}
  {\Gamma}^*_{sum} &=  \underset{S \in 2^{\Gamma}}{\text{arg max}}\left( \log{\frac{Z_2}{Z_1}} + \log{\frac{\pi_+}{\pi_-} }\right)\left(\sum_{i=1}^N y^{(i)} \right)\nonumber \\ 
   &+\sum_{\phi \in S}(\lambda'_{\phi} - \lambda_{\phi}) \left(
  \sum_{i=1}^N  y^{(i)} \phi(\mathbf{x}^{(i)})
  \right) \label{eq:features},
 \end{align}
 where
 \begin{align}
  P^*_+(\mathbf{x}^{(i)};S)& = \frac{1}{Z_1} \exp\left( -\sum_{\phi \in S} \lambda_{\phi}\phi(\mathbf{x}^{(i)}) \right) \label{eq:class1}\\
  P^*_-(\mathbf{x}^{(i)};S)& = \frac{1}{Z_2} \exp\left( -\sum_{\phi \in S} \lambda'_{\phi}\phi(\mathbf{x}^{(i)}) \right) \label{eq:class2}.
 \end{align}
  We use the superscript `*' to indicate that the class conditional distributions are the ME
 distributions.

 If the number of training points in the two classes are equal, the first term in \eqref{eq:features} can be discarded. 
 Furthermore, by separating the terms in the two classes, we get
 \begin{align}
  {\Gamma}^*_{sum} =  \underset{S \in 2^{\Gamma}}{\text{arg max}}\left(\sum_{\phi \in S} (\lambda_{\phi}' - \lambda_{\phi})(\mu_{\phi} - \mu_{\phi}')\right)\times N \label{eq:j2class}
 \end{align}

 It is easy to see that this is exactly $N$ times the $J$-divergence
 between the  distributions  in \eqref{eq:class1} and
 \eqref{eq:class2}.
 One can similarly show that when the number of points in the two
 classes are not equal, we still obtain the above equation
 if we assign proper weights to instances in the two classes (based on
 the number of points in the class). Reverting back to our
 original notations, we can say that if one intends to use Bayes classifier
 where the class conditional distributions are obtained using maximum
 entropy, a natural way to do so would be to select features as below.
 \begin{equation}
 \Gamma^{*} = \underset{S \in 2^{\Gamma}}{\text{arg max}} \enspace
   J ({ P_{c_{1}}^{*}(x;S) || P_{c_{2}}^{*}(x;S) }),\label{eq:j2problem}\\
 \end{equation}
 where $P_{c_{j}}^{*}(x;S)$ indicates the ME distribution
 estimated for class $c_{j}$ using expected values of the feature
 functions that are in $S \subset \Gamma$.
\subsection{The MeMd Approach}
The MeMd approach can be formulated as follows.
Given a set of feature
functions $\Gamma = \{\phi_{1},\ldots,\phi_{l}\}$, the problem 
is to find the subset $\Gamma^{*} \subset \Gamma$ such
that  
\begin{equation}
\Gamma^{*} = \underset{S \in 2^{\Gamma}}{\text{arg max}} \enspace
  J ({ P_{c_{1}}^{*}(x;S) || P_{c_{2}}^{*}(x;S) }),\\
\end{equation}
where $P_{c_{j}}^{*}(x;S)$ indicates the ME distribution
estimated for class $c_{j}$, $j \in \{1,2\}$, 
using expected values of the feature functions that are in $S \subset \Gamma$.

This problem is intractable particularly when large number
of features are involved (which is the case for high-dimensional data)
since it involves estimation of $J-$divergence for $2^{l}$ subsets
to find the optimal subset from the given set of $l$ feature functions.  
The problem was studied in Dukkipati et al.~\cite{DukkipatiYadavMurty:2010:MaximumEntropyModelBasedClassification} where a greedy search was used to select the features thereby reducing the complexity from exponential to $O(l^{2})$. 
Even with this greedy approach, the method does not scale up well
 for large dimensional data such as text  
data. Moreover  estimating the ME distributions, and hence finding the
exact value of $J-$divergence between  the estimated class conditional
densities is  computationally demanding especially for large
dimensional data.  

Our strategy is to use naive Bayes approach since for text data, naive
Bayes classifiers have shown to outperform (or given comparable
performance) compared with other classifiers~\cite{FriedmanCeigerGoldszmidt:1997:BayesianNetworkClassifiers}
and its good performance is attributed to optimality under zero-one loss
function~\cite{Domingos:1997:OptimalityofBayesClassifier}.  
Therefore we have
\begin{equation*}
P_{c_{j}}(\mathbf{x}) = 
 \prod_{i=1}^{d} P_{c_{j}}^{(i)} ({x}_{i}),
\end{equation*}
where $c_j$ is the class label  
and $P_{c_{j}}^{(i)}$ is the marginal density for $x_{i}$, the
$i^{th}$ feature of the data $\mathbf{x}$.

This leads to simplification of the greedy step 
of the algorithm proposed by Dukkipati et al.~\cite{DukkipatiYadavMurty:2010:MaximumEntropyModelBasedClassification}
to great extent as shown in following result.

\begin{theorem}
\label{lem:J_greedy}
The feature chosen at the $k^{th}$ step of the greedy approach is the one with $k^{th}$ largest $J-$divergence
between the marginals of class conditional densities.
\end{theorem}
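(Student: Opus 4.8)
The plan is to exploit the additive structure that the naive Bayes (conditional independence) assumption imposes on the $J$-divergence. First I would show that when both class conditional densities factorize over features, $P_{c_j}(\mathbf{x}) = \prod_{i=1}^d P_{c_j}^{(i)}(x_i)$, the $J$-divergence between the joint densities splits as a sum of the per-feature marginal $J$-divergences,
\begin{equation*}
J(P_{c_1} \parallel P_{c_2}) = \sum_{i=1}^d J\!\left(P_{c_1}^{(i)} \parallel P_{c_2}^{(i)}\right).
\end{equation*}
This decomposition is the heart of the argument. It follows because the log-ratio of two product densities is the sum of the per-coordinate log-ratios, $\ln \frac{P_{c_1}(\mathbf{x})}{P_{c_2}(\mathbf{x})} = \sum_{i} \ln \frac{P_{c_1}^{(i)}(x_i)}{P_{c_2}^{(i)}(x_i)}$, and integrating each summand against a product density collapses every coordinate except the matching one to an integral equal to $1$. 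Carrying this out for both $KL(P_{c_1} \parallel P_{c_2})$ and $KL(P_{c_2} \parallel P_{c_1})$ and adding the results, via the definition \eqref{eq:jef}, yields the claimed sum.

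Next I would interpret the greedy procedure in this additive setting. Under the naive Bayes factorization, including a feature in the selected subset $S$ contributes only its own marginal term to the decomposition, so the objective of \eqref{eq:j2problem} reduces to $J(P_{c_1}^{*}(\cdot;S) \parallel P_{c_2}^{*}(\cdot;S)) = \sum_{i \in S} J(P_{c_1}^{(i)} \parallel P_{c_2}^{(i)})$, a sum of nonnegative per-feature contributions that do not interact. The greedy step at iteration $k$ augments the current subset $S_{k-1}$ by the single feature that maximizes the updated objective; by separability, the increment obtained from adding feature $i$ equals precisely its marginal $J$-divergence $J(P_{c_1}^{(i)} \parallel P_{c_2}^{(i)})$, independently of which features already belong to $S_{k-1}$.

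Finally I would conclude by a short ordering argument. Since each candidate's contribution is fixed and does not depend on the current subset, the feature that maximizes the increment at step $k$ is simply the one with the largest marginal $J$-divergence among the features not yet selected. Hence step $1$ picks the feature with the largest marginal $J$-divergence, step $2$ the second largest, and in general the $k^{th}$ step picks the feature with the $k^{th}$ largest marginal $J$-divergence between the class conditional marginals, which is exactly the assertion of the theorem. I expect the decomposition of the first paragraph to be the only real obstacle; once the additivity of the $J$-divergence over product-form densities is established, the greedy claim collapses to the elementary fact that greedily maximizing a separable sum one term at a time is the same as sorting the terms in decreasing order.
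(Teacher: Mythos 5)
Your proposal is correct and follows essentially the same route as the paper: establish the additivity $J(P_{c_1}\parallel P_{c_2})=\sum_{i} J(P_{c_1}^{(i)}\parallel P_{c_2}^{(i)})$ from the naive Bayes factorization, then observe that the greedy increment from adding feature $j$ is exactly its marginal $J$-divergence regardless of the current subset, so the greedy selection reduces to sorting. The only difference is that you spell out the product-density integration behind the decomposition, which the paper simply cites as additivity of $KL$-divergence under independence.
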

\begin{proof}
Using the additivity of $KL-$divergence under independence, $J-$divergence between the two class conditional densities
$P_{c_{1}}$ and $P_{c_{2}}$ can be written as
\begin{align}
J\big(P_{c_{1}} \big\Vert P_{c_{2}} \big) 
= J\left( \prod_{i=1}^{d} P_{c_{1}}^{(i)} \left\Vert \prod_{i=1}^{d} P_{c_{2}}^{(i)} \right. \right) 	
= \sum_{i=1}^{d} J\big(P_{c_{1}}^{(i)} \big\Vert P_{c_{2}}^{(i)} \big)  \enspace.
\label{j_additivity}
\end{align}
Suppose a set $S$ of $(k-1)$ features are already chosen. The corresponding approximation
of the class conditional density is
\begin{displaymath}
  P_{c_j}(\mathbf{x}) \approx \prod_{i\in S} P_{c_{j}}^{(i)}(x_i), \qquad j=1,2,
\end{displaymath}
and the optimal feature is
\begin{align*}
j^{*} &= \underset{j \notin S}{\text{arg max}} \enspace
J\left( \prod_{i\in S\cup\{j\}} P_{c_{1}}^{(i)} \left\Vert \prod_{i\in S\cup\{j\}} P_{c_{2}}^{(i)} \right. \right)
\\ &= \underset{j \notin S}{\text{arg max}} \enspace \left\{
J\left( \prod_{i\in S} P_{c_{1}}^{(i)} \left\Vert \prod_{i\in S} P_{c_{2}}^{(i)} \right. \right)
+ J\left( P_{c_{1}}^{(j)} \left\Vert  P_{c_{2}}^{(j)} \right. \right) \right\}
\\ &= \underset{j \notin S}{\text{arg max}} \enspace 
J\left( P_{c_{1}}^{(j)} \left\Vert  P_{c_{2}}^{(j)} \right. \right) \; .
\end{align*}
The above maximization is equivalent to choosing the feature with $k^{th}$ largest $J-$divergence in the $k^{th}$ step.
\end{proof}

Thus  $J-$divergence can be readily used to rank the features based on the
discrimination between the two classes, and prune out those features
which gives rise to a small value of $J-$divergence between the estimated marginal densities
$P_{c_{j}}^{(i)}(x_{i})$, which are of the ME form
\begin{equation}
 \label{eq:prob}
P_{c_{j}}^{(i)}(x_{i}) = \exp\left(- \lambda_{0_{ij}} - \sum_{k=1}^{l}
\lambda_{k_{ij}}\phi_{k}(x_{i})\right) \enspace,
\end{equation}
for each class $c_j$, $j=1,2$, and each feature $x_i$, $i=1,2,\ldots d$.

For classification, any standard method can be used.
However, since the class conditional densities are estimated during the above process, 
Bayes decision rule~\cite{DudaHartStork:2001:PatternClassification} turns out to be an obvious choice, \textit{i.e.},
a test pattern is assigned class $c_1$ if
\begin{equation*}
 P_{c_1}(\mathbf{x})P(c_{1}) > P_{c_2}(\mathbf{x})P(c_{2}),
\end{equation*}
otherwise to class $c_2$, where $P(c_1)$ and $P(c_2)$ are the priors for each class.
Using only the top $K$ features, the class conditional densities can be approximated as
\begin{equation}
 P_{c_j}(\mathbf{x}) \approx \prod_{i\in S} P_{c_{j}}^{(i)}(x_i), \qquad j=1,2.
 \label{eq:ccd_approx}
\end{equation}
So, the decision rule can be written as
\begin{align*}
\prod_{i\in S} \frac{P_{c_1}^{(i)}(x_i)}{P_{c_2}^{(i)}(x_i)} &>
\frac{P(c_2)}{P(c_1)} \enspace,
\end{align*}
which, after taking logarithm, turns out to be

\begin{align}  
  \sum_{i\in S} ( \lambda_{0_{i2}} - \lambda_{0_{i1}} &+ \sum_{k=1}^{l} (\lambda_{k_{i2}} - \lambda_{k_{i1}} ) \phi_k^{(i)} (x_i) ) \nonumber \\
    &> \ln P(c_2) - \ln P(c_1).
  \label{eq:Bayes_2class}
 \end{align}
We list the above method in Algorithm~\ref{alg:memd_ind}. 
The corresponding experimental results are presented in Section~\ref{sec_exp_results}.

\begin{algorithm}[ht]
\caption{MEMD with Naive Bayes for binary classification}

\textbf{INPUT :}  
\begin{itemize}
 \item 
Two labeled datasets of class $c_{1} $and $c_{2}$.
 \item
Data of the form {$\mathbf{x} = ( x_{1},x_{2}, \ldots ,x_{n} )$}, $x_i$ denoting $i^{th}$ feature. 
 \item
A set of $l$ constraints {$\Gamma^{(i)} = \left\{ \phi_{1}^{(i)},\ldots,\phi_{l}^{(i)} \right\}$} 
to be applied on each feature $x_{i}$, {$i = 1,\ldots,d$}.
\end{itemize}

\textbf{ALGORITHM :}
 \begin{enumerate}
\item
ME densities for each $P_{c_j}^{(i)}$,
$i= 1,2, \ldots,d$ and $j=1,2$  are estimated using~\eqref{eq:prob}.

\item
$J-$divergence for each feature (denote as $J_i$, {$i=1,\ldots,d$}) is calculated using~\eqref{eq:jef}.

\item
The features are ranked in descending order according to their $J-$divergence values, and
the top $K$ features are chosen (to be considered for classification). 

\item
Bayes decision rule is used for classification using~\eqref{eq:Bayes_2class}.
 \end{enumerate}
\label{alg:memd_ind}
\end{algorithm}

An interesting fact to note here is that for distributions of the form in~\eqref{eq:prob}, 
the $J-$divergence can be obtained in a simple form as given in the following result.
\begin{remark}
\label{lem:J_formula}
Suppose there are two ME distributions
$P(x) = \exp\left( -\lambda_{0}  -\sum_{j=1}^{l}\lambda_j  \phi_j(x)\right)$ and
$Q(x) = \exp\left( -\lambda_{0}' -\sum_{j=1}^{l}\lambda_j' \phi_j(x)\right)$,
obtained using same set of feature functions
{$\{\phi_1, \ldots, \phi_l\}$}, but with different expected values
{$\{\mu_1,  \ldots, \mu_l\}$} and {$\{\mu_1',  \ldots, \mu_l'\}$}, respectively.
Then
\begin{equation}
\label{j_addi}
J(P \parallel Q)   = \sum_{j=1}^{l} (\lambda_{j}' - \lambda_{j})(\mu_{j} - \mu_{j}').
\end{equation}
\end{remark}

This result can be used to evaluate the $J-$divergence between the
marginals of the class conditional 
densities for each feature, which can be used to rank all the features in $O(d)$ time 
(in decreasing value of $J-$divergence). 
\section{Multi Class Classification}
\label{sec_multiclass}
\subsection{One vs. All Approach}
\label{sec_onevsall}

The most common technique in practice for multi-class
classification is to use ``\emph{one vs. all}'' approach, where the number of
classifiers built is equal to the number of classes, \textit{i.e.}, for each class we build a
classifier for that class against all the other
classes~\cite{AllweinShapireSinger:2000:ReducingMulticlassToBinary}.     
Incorporating such a technique does not affect the basic MeMd approach for ranking the features.
Hence, Algorithm~\ref{alg:memd_ind} can be easily extended
to multi-class case by using one vs. all
approach as presented in Algorithm~\ref{alg:onevsall_multi}. 

Here, we
consider a $M$-class problem, with classes {$c_1, c_2, \ldots, c_M$}.
The rest of the setting is same as Algorithm~\ref{alg:onevsall_multi}.
The modification can be described as follows. For each class $c_j$,
consider the class {$c_j' = \cup_{k\neq j} c_k$}. 
So, the `discriminative capacity' of 
each feature for a particular class can be measured by its $J-$divergence
between the class conditional densities $c_j$ and $c_j'$,
\begin{align}
 J_{ij} &:= J \left( P_{c_j}^{(i)} \left\Vert P_{c_j'}^{(i)}
 \right. \right) \enspace,
 \label{eq:J_1vsAll}
\end{align}
where $P_{c_j}^{(i)}$ and $P_{c_j'}^{(i)}$ are the ME marginal densities
for classes $c_j$ and $c_j'$, respectively, for the {$i^{th}$} feature.
$J_{ij}$ can be easily computed using~\eqref{j_addi}. However, this provides us a 
$J-$divergence for each feature for a particular class. A natural way to 
obtain a $J-$divergence for each feature is 
\begin{equation}
 J_i = \sum\limits_{j=1}^{M} J_{ij} P(c_j),
 \label{eq:jef_avg}
\end{equation}
\textit{i.e.}, the divergence is averaged over all the classes, weighted 
by their prior probabilities {$P(c_j)$}.
The algorithm is listed as below.

\begin{algorithm}[ht]
\caption{(MeMd-J) : MEMD for  multi-class classification using one vs. all approach.}

\textbf{INPUT :}  
\begin{itemize}
 \item 
Labeled datasets of $M$ classes.
 \item
Data of the form {$\mathbf{x} = ( x_{1},x_{2}, \ldots ,x_{n} )$}, $x_i$ denoting $i^{th}$ feature. 
 \item
A set of $l$ constraints {$\Gamma^{(i)} = \left\{ \phi_{1}^{(i)},\ldots,\phi_{l}^{(i)} \right\}$} 
to be applied on each feature $x_{i}$, {$i = 1,\ldots,d$}.
\end{itemize}

 \textbf{ALGORITHM :}
 \begin{enumerate}
\item
Construct the classes {$c_j' = \bigcup\limits_{k\neq j} c_k$}, {$j = 1,2,\ldots, M$}.
 
\item
ME densities $P_{c_j}^{(i)}$ and $P_{c_j'}^{(i)}$,
$i= 1,2, \ldots,d$ and $j=1,2,\ldots, M$  are estimated using~\eqref{eq:prob}.

\item
The average $J-$divergence for each feature is calculated using~\eqref{eq:jef_avg}.

\item
The features are ranked in descending order according to their average $J-$divergence values, and
the top $K$ features are chosen (to be considered for classification). 

\item
Bayes decision rule is used to assign a test
pattern to $c_{j^{*}}$ such that
\begin{displaymath}
 j^{*} = \underset{j = 1,\dots, M}{\text{arg max}}  \enspace P_{c_j}(\mathbf{x})P(c_{j}),
\end{displaymath}
where the class conditional densities are approximated as in~\eqref{eq:ccd_approx} for {$j = 1,2,\ldots,M$}.

\end{enumerate}

\label{alg:onevsall_multi}
\end{algorithm}

The above approach is an obvious extension of the binary classification problem.
Although the algorithm is $O(d)$, it
is relatively computationally inefficient in the sense that it requires estimation
of the additional ME distributions $P_{c_j'}^{(i)}$ 
(which is the most time consuming step of MeMd approach).
Now, we present our main contribution in this paper. To deal with
the multi-class classification, we invoke a natural multi-distribution divergence,
the $JS-$divergence.
\subsection{Multi class Classification using $JS-$divergence}
\label{sec_multiclass_js}
The $JS-$divergence, as mentioned in Section~\ref{sec_preliminaries}
can be defined over multiple distributions, and hence, would seem to
be more useful for dealing with multi-class problems. However, its
most interesting feature is presented in the following
result~\cite{Grosse:2002:AnalysisOfSymbolicSequences}. We state it in
a form more suitable for our purpose, and interpret it for
multi-class classification similarity. 
\begin{proposition}
Consider a $M$-class problem with class-conditional densities
$P_{c_{i}}(x) = P(x|c_i)$, $i=1,2,\ldots,M$. If $X$ is a random data,
and $Z$ is an indicator random variable, \textit{i.e.}, $Z=i$ when
$X\in c_i$, then 
\begin{displaymath}
 JS(P_{c_{1}}, \ldots ,P_{c_{M}}) = MI(X;Z).
\end{displaymath}
Here, $MI(X;Z)$ is the mutual information between $X$ and $Z$,
and the $JS-$divergence is computed using the priors as weights,
(\textit{i.e.}, {$\pi_j = P(c_j)$}) as 
\begin{equation}
 JS(P_{c_{1}}, \ldots ,P_{c_{M}}) = \sum_{j=1}^{M} P(c_j) KL( P_{c_j} \Vert \overline{P}),
 \label{eq:js_compute}
\end{equation}
where {$\overline{P}(X) = \sum\limits_{j=1}^{M} P(c_j) P_{c_j}(X)$} is the probability 
of sampling the data $X$.
\end{proposition}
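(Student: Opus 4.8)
The plan is to prove the identity by directly expanding the definition of mutual information and recognizing that it collapses, term by term, into the weighted sum of Kullback--Leibler divergences that defines the $JS-$divergence. First I would set up the joint distribution of the pair $(X,Z)$. Since $Z$ is the class indicator with $P(Z=j) = P(c_j)$ and $P(x \mid Z=j) = P_{c_j}(x)$, the joint density factors as $P(x, Z=j) = P(c_j)\,P_{c_j}(x)$. Summing over the classes gives the marginal of $X$,
\begin{displaymath}
P(x) = \sum_{j=1}^M P(c_j)\,P_{c_j}(x) = \overline{P}(x),
\end{displaymath}
which is precisely the mixture density appearing in the statement.

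Next I would write out the mutual information between the continuous variable $X$ and the discrete indicator $Z$ as
\begin{displaymath}
MI(X;Z) = \sum_{j=1}^M \int P(x,Z=j)\,\log\frac{P(x,Z=j)}{P(x)\,P(Z=j)}\,\ud x,
\end{displaymath}
and substitute the factorization above together with $P(x)=\overline{P}(x)$. The key simplification is that the factor $P(c_j)$ occurs in both the numerator and the denominator inside the logarithm and therefore cancels, leaving
\begin{displaymath}
MI(X;Z) = \sum_{j=1}^M P(c_j)\int P_{c_j}(x)\,\log\frac{P_{c_j}(x)}{\overline{P}(x)}\,\ud x.
\end{displaymath}

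Finally I would identify the inner integral as $KL(P_{c_j}\Vert\overline{P})$, so that the whole expression equals $\sum_{j=1}^M P(c_j)\,KL(P_{c_j}\Vert\overline{P})$, which is exactly the definition of $JS(P_{c_1},\ldots,P_{c_M})$ with the priors $\pi_j = P(c_j)$ as weights, as in \eqref{eq:js_compute}. The argument is essentially a bookkeeping exercise, so I do not expect a genuine obstacle; the only points requiring care are to be explicit that $Z$ is determined by class membership so that the joint factorizes cleanly, and that the marginal $P(x)$ coincides with $\overline{P}(x)$. This second observation is the crux: it is what forces the $P(c_j)$ terms to cancel and thereby turns the mutual information into the mixture-based $JS-$divergence rather than some other functional of the class conditionals.
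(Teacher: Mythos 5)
Your proof is correct: the factorization $P(x,Z=j)=P(c_j)P_{c_j}(x)$, the identification of the marginal of $X$ with the mixture $\overline{P}$, and the cancellation of $P(c_j)$ inside the logarithm together reduce $MI(X;Z)$ exactly to $\sum_j P(c_j)\,KL(P_{c_j}\Vert\overline{P})$, which is \eqref{eq:js_compute}. Note that the paper itself supplies no proof for this proposition --- it is stated as a known result imported from Grosse et al. --- so there is nothing to compare against; your direct expansion is the standard argument and fills that gap cleanly.
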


The problem of classification basically deals prediction of the value of $Z$,
given the test data $X$. Hence, it is natural to device a technique which
would maximize the mutual information, or rather, maximize the $JS-$divergence
among the class-conditional densities. Thus $JS-$divergence turns out to be a more
natural measure of discrimination compared to $J-$divergence.
However, unlike $J-$divergence, $JS-$divergence does not satisfy 
the additivity property, \textit{i.e.}, there is no equivalent
for~\eqref{j_additivity} for $JS-$divergence. 
Hence, the ``\textit{correct}'' extension of the algorithm in
case of $JS-$divergence becomes involved as we need to consider 
a quadratic time greedy approach,
where at each iteration we select a new feature that maximizes 
$JS-$divergence among the classes and also has minimum mutual information
with the previously chosen feature set.

The latter term leads to an $O(d^2)$ time complexity for the 
feature selection rule. 
A linear time algorithm similar to MeMd-J can be obtained if we assume 
that all the features are independent, which is a quite strict assumption,
particularly for text datasets.
In spite of the above issues, use of $JS-$divergence is quite significant
due to its theoretical justifications. Moreover, use of a multi-distribution
divergence significantly reduces the the number of models that needs to
be estimated in a one-vs-all approach.
Hence, we look for an approximation of the $JS-$divergence that  
some-what retains its properties, and can also exploit the linearity of $J-$divergence.
\section{$JS_{GM}-$ discrimination}
\label{sec_newdiv}
\subsection{AM to GM}
Although $JS-$divergence seemed to be a natural discriminative measure, its drawbacks from
algorithmic point of view motivates us to look into some form of approximation in terms of
$J-$divergence, which looked more promising (computationally).
The idea comes from~\cite{Sgarro:1979:InformationDivergence}, which introduced the notion of 
average divergence by just taking average over $KL-$divergence between all possible pairs
of distributions. 

However, we approach the problem differently, but obtain a similar result.
The modification is made at the basic definition of $JS-$divergence~\eqref{JSD},
where we replace the weighted mean $\overline{P} = \sum_{i=1}^k \pi_i P_i $
by the weighted geometric mean $\overline{P}' = \prod_{i=1}^k P_i^{\pi_i} $.
Though $\overline{P}'$ does not have any straightforward physical interpretation
(in fact, it is not a distribution),
it leads to simpler expressions as discussed below. We call the corresponding
divergence as $JS-$divergence with geometric mean, or simply $JS_{GM}-$divergence.
Given distributions {$P_1, P_2, \ldots, P_M$} with weights {$\pi_1, \pi_2, \ldots, \pi_M$},
the $JS_{GM}-$divergence among them is given by
\begin{align}
 JS_{GM} (P_1,  \ldots, P_M) &= \sum_{i=1}^{M} \pi_i KL(P_i \Vert \overline{P}') 			\nonumber
 \\&=  \sum_{i=1}^{M} \sum_{j\neq i} \pi_i \pi_j KL(P_i \Vert P_j)	\; .	
 \label{eq:JSgm_KLform}
\end{align}
In the case of uniform weights ({$\pi_i = \frac{1}{M}$}), it is same as the average divergence~\cite{Sgarro:1979:InformationDivergence}
upto a constant scaling factor. Observing the symmetric nature of~\eqref{eq:JSgm_KLform},
the divergence can also be written in terms of $J-$divergence as
\begin{equation}
 JS_{GM} (P_1,  \ldots, P_M) =  \frac{1}{2} \sum_{i=1}^{M} \sum_{j\neq i} \pi_i \pi_j J(P_i \Vert P_j) \;.	
 \label{eq:JSgm_Jform}
\end{equation}

The $JS_{GM}-$divergence satisfies the following inequality, which has been shown
in~\cite{Sgarro:1979:InformationDivergence} for uniform weights.
\begin{proposition}
 \label{lem:JS_JSgm}
 For any set of weights {$\pi_1, \pi_2, \ldots, \pi_M$},
 \begin{displaymath}
  JS(P_1,  \ldots, P_M) \leqslant JS_{GM}(P_1, \ldots, P_M)
 \end{displaymath}
\end{proposition}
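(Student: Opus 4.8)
The plan is to reduce the inequality to a single pointwise application of the weighted AM--GM inequality. First I would observe that the two quantities differ only in the second argument of the Kullback--Leibler term: $JS$ uses the arithmetic mean $\overline{P}=\sum_{i}\pi_i P_i$, whereas $JS_{GM}$ uses the geometric mean $\overline{P}'=\prod_{i}P_i^{\pi_i}$. Writing each as a sum of $KL$ terms and subtracting, the common self-entropy contribution $\int P_i \ln P_i$ cancels inside every summand, which leaves
\[
  JS_{GM} - JS = \sum_{i=1}^{M}\pi_i \int_{\mathcal{X}} P_i(x)\,\ln\frac{\overline{P}(x)}{\overline{P}'(x)}\,\ud x .
\]

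Next I would interchange the finite sum and the integral. Since the logarithmic factor does not depend on $i$, the remaining weighted sum collapses, $\sum_{i}\pi_i P_i(x)=\overline{P}(x)$, so that
\[
  JS_{GM} - JS = \int_{\mathcal{X}} \overline{P}(x)\,\ln\frac{\overline{P}(x)}{\overline{P}'(x)}\,\ud x .
\]
The key step is then to show the integrand is nonnegative everywhere. Applying the weighted AM--GM inequality at each fixed $x$ to the nonnegative values $P_1(x),\ldots,P_M(x)$ with weights $\pi_1,\ldots,\pi_M$ (which sum to one) gives $\overline{P}(x)=\sum_i \pi_i P_i(x)\geq \prod_i P_i(x)^{\pi_i}=\overline{P}'(x)$. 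Hence $\ln(\overline{P}(x)/\overline{P}'(x))\geq 0$, and since $\overline{P}(x)\geq 0$, the integrand is nonnegative, the integral is nonnegative, and the claimed bound follows for arbitrary weights.

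The hard part will be only a matter of bookkeeping rather than of substance. Because $\overline{P}'$ is not itself a probability density --- it integrates to at most one, again by AM--GM --- the expression $\int \overline{P}\ln(\overline{P}/\overline{P}')$ is not literally a $KL$ divergence, so I would deliberately avoid invoking nonnegativity of $KL$ and rely instead on the pointwise AM--GM bound, which is immune to this normalization gap. I would also record the usual conventions ($0\ln 0 = 0$) and note that the cancellation and the logarithm are valid on the common support of the $P_i$, where $\overline{P}'(x)>0$. Equality holds precisely when $P_1=\cdots=P_M$ almost everywhere, i.e.\ when the arithmetic and geometric means coincide.
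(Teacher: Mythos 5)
Your proof is correct. The paper disposes of this in one line by invoking convexity of the $KL$-divergence in its second argument: $KL(P_i\,\Vert\,\sum_j\pi_j P_j)\leq\sum_j\pi_j KL(P_i\,\Vert\,P_j)$ for each $i$, and summing with weights $\pi_i$ gives $JS\leq JS_{GM}$ via the identity \eqref{eq:JSgm_KLform}. Your route is different in organization though not in ultimate substance: rather than bounding each $KL(P_i\,\Vert\,\overline{P})$ term separately, you cancel the common $\int P_i\ln P_i$ contributions, collapse the weighted sum against the $i$-independent logarithm, and arrive at the exact identity
\begin{displaymath}
JS_{GM}-JS=\int_{\mathcal{X}}\overline{P}(x)\,\ln\frac{\overline{P}(x)}{\overline{P}'(x)}\,\ud x\;\geqslant 0,
\end{displaymath}
with nonnegativity read off pointwise from weighted AM--GM. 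The two arguments rest on the same elementary fact --- concavity of the logarithm is exactly what makes $KL(P\,\Vert\,\cdot)$ convex and exactly what gives $\overline{P}\geq\overline{P}'$ --- but yours buys something extra: a closed-form expression for the gap (a $KL$-like discrepancy between the arithmetic and geometric means of the $P_i$) and hence the equality condition $P_1=\cdots=P_M$ a.e.\ on the support, neither of which falls out of the paper's one-liner. Your care in avoiding the nonnegativity-of-$KL$ shortcut is well placed, since $\overline{P}'$ is indeed not normalized; the pointwise bound sidesteps that cleanly. The only cosmetic remark is that your final expression for the difference should be understood with the convention that the $j=i$ terms vanish, matching the paper's writing of $JS_{GM}$ as a sum over $j\neq i$.
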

\begin{proof}
 The claim follows from the observation that $KL-$divergence is convex in its second argument.
\end{proof}

The usefulness of the $JS_{GM}-$divergence stems from the fact that
it is a multi-distribution divergence that generalizes $J-$divergence, 
which helps us to overcome the difficulties of the one vs. all approach. So it does not
requires estimation of extra ME distributions. At the same time, it can be expressed
in terms of $J-$divergence~\eqref{eq:JSgm_Jform}, which helps us to exploit the nice properties 
of $J-$divergence discussed in Theorem~\ref{lem:J_greedy} and Remark~\ref{lem:J_formula}.
So, 
the following algorithm using $JS_{GM}-$divergence is analogous to the binary 
classification algorithm with $J-$divergence (Algorithm~\ref{alg:memd_ind}).
The equivalence can be easily observed by considering the following equivalence.
For each feature, we replace

\begin{align}
 J &\left( P_{c_j}^{(i)} \left\Vert P_{c_j'}^{(i)} \right. \right)
 \text{ in Algo~\ref{alg:onevsall_multi} }
 \longleftrightarrow \nonumber \\ 
 &\sum_{k\neq j} P(c_k) J \left( P_{c_j}^{(i)} \left\Vert P_{c_k}^{(i)} \right. \right) 
 \text{ in Algo~\ref{alg:JSGM}},
\end{align}
\textit{i.e.}, $J-$divergence between class conditional densities for $c_j$ and all other 
classes taken together ($c_{j'}$) is replaced by the weighted average of the $J-$divergences
between $c_j$ and other classes.

\begin{algorithm}
 \caption{(MeMd-JS) : MEMD for multi-class classification using $JS_{GM}-$divergence}
 
\textbf{INPUT :}  
\begin{itemize}
 \item 
Labeled datasets of $M$ classes.
 \item
Data of the form {$\mathbf{x} = ( x_{1},x_{2}, \ldots ,x_{n} )$}, $x_i$ denoting $i^{th}$ feature. 
 \item
A set of $l$ constraints {$\Gamma^{(i)} = \left\{ \phi_{1}^{(i)},\ldots,\phi_{l}^{(i)} \right\}$} 
to be applied on each feature $x_{i}$, {$i = 1,\ldots,d$}.
\end{itemize}

 \textbf{ALGORITHM :}  
 
  \begin{enumerate}
\item
ME densities $P_{c_j}^{(i)}$,
$i= 1,2, \ldots,d$ and $j=1,2,\ldots, M$  are estimated using~\eqref{eq:prob}.

\item
The $J-$divergence for each feature and each pair of classes is calculated using~\eqref{eq:js_compute}.

\item
The $JS_{GM}-$divergence for each feature can be computed as

\begin{align*}
 JS_{GM} ^{(i)} &= JS_{GM} (P_{c_1}^{(i)}, P_{c_2}^{(i)}, \ldots, P_{c_M}^{(i)}) \nonumber
  \\&=  \sum_{j=1}^{M} \sum_{k\neq j} P(c_j) P(c_k) J \left( P_{c_j}^{(i)} \left\Vert P_{c_k}^{(i)} \right. \right)	\; .	
\end{align*}

\item
The features are ranked in descending order according to their
$JS_{GM}-$divergence values, and 
the top $K$ features are chosen (to be considered for classification). 

\item
Bayes decision rule is used to assign a test
pattern to $c_{j^{*}}$ such that
\begin{displaymath}
 j^{*} = \underset{j = 1,\dots, M}{\text{arg max}}  \enspace P_{c_j}(\mathbf{x})P(c_{j}),
\end{displaymath}
where the class conditional densities are approximated as in~\eqref{eq:ccd_approx} for {$j = 1,2,\ldots,M$}.

\end{enumerate}
\label{alg:JSGM}
\end{algorithm}

We can also extend the arguments in Section~\ref{sec:whymaxdisc} to show that for maximum entropy Bayes classification in the multi-class scenario, $JS_{GM}-$divergence is a natural choice. The algorithm has linear time complexity $O(d)$, and requires half the
time as compared to  Algorithm~\ref{alg:onevsall_multi} (one vs. all).
However, the performance of both algorithms are quite similar as
discussed in the experimental results.

\begin{table*}[ht]
 \centering
 \caption{Comparison of complexity of algorithms.}
 \begin{tabular}{|c|c|c|c|c|c|}
  \hline
  Algorithm	& Notation	& Classification & \multicolumn{2}{c|}{Training time}		& Testing time 	\\
  \cline{4-5}
		&&	& Estimation	& Feature ranking	& per sample	\\
  \hline \hline
  
  MeMd using one vs. all approach & MeMd-J	& multiclass	& O($MNd$)		& O($Md +d\log d$)	& O($MK$)	\\
  MeMd using $JS_{GM}-$divergence & MeMd-JS	& multiclass	& O($MNd$)		& O($M^2d+d\log d$)		& O($MK$)	\\
  \cline{4-5}
  MeMd using greedy approach~\cite{DukkipatiYadavMurty:2010:MaximumEntropyModelBasedClassification}
  & MeMd	& binary	& \multicolumn{2}{c|}{O($Nd^2$)}		& O($Md$)	\\
  Support Vector Machine~\cite{ChangLin:2011:LIBSVM}	& SVM 	& multiclass	& \multicolumn{2}{c|}{\#iterations*O($Md$)}  	& O($M^2Sd$)	\\
  Discriminative approach using ME~\cite{NigamLaffertyMcCallum:1999:UsingMaximumEntropyForTextClassification} & DME	& multiclass	& \multicolumn{2}{c|}{\#iterations*O($MNd$)}	& O($Md$)	\\
  \hline
 \end{tabular}
 \label{tab:algo_complexity}
\end{table*}

\begin{table*}
 \centering 
 \caption{Performance comparison on gene expression datasets.}
 \label{tab:gene}
 \begin{tabular}{|c||c|c|c||c|c|c|}
  \hline 
	      & \multicolumn{3}{c||}{Data attributes}	& \multicolumn{3}{c|}{10-fold cross validation accuracy} \\
  \cline{2-7}
    Dataset 	& No. of  & No. of  & No. of		& SVM      & MeMD-J     & MeMd-JS	\\
		& classes & samples & features		& (linear) & (2-moment) & (2-moment)	\\
  \hline \hline
  Colon cancer~\cite{ARFF:Datasets}
		& 2	  & 62	    & 2000		& 84.00	   & \multicolumn{2}{c|}{\bf 86.40}	\\
  Leukemia~\cite{ARFF:Datasets,MramorLeban:2007:VisualizationBasedCancerMicroarray}
		& 2	  & 72	    & 5147		& 96.89	   & \multicolumn{2}{c|}{\bf 98.97}	\\
  Embryonal tumors of CNS~\cite{ARFF:Datasets}
		& 2	  & 60	    & 7129		& 62.50	   & \multicolumn{2}{c|}{\bf 63.75}	\\
  DLBCL~\cite{MramorLeban:2007:VisualizationBasedCancerMicroarray}
		& 2	  & 77	    & 7070		& {\bf 97.74}   & \multicolumn{2}{c|}{86.77}	\\
  Prostate cancer~\cite{MramorLeban:2007:VisualizationBasedCancerMicroarray}
		& 2	  & 102	    & 12533		& 89.51	   & \multicolumn{2}{c|}{\bf 89.75}	\\
  \cline{6-7}
  SRBCT~\cite{MramorLeban:2007:VisualizationBasedCancerMicroarray}
		& 4	  & 83	    & 2308		& {\bf 99.20}   &     97.27       & 98.33	\\
  Human lung carcinomas~\cite{MramorLeban:2007:VisualizationBasedCancerMicroarray}
		& 5	  & 203	    & 12600		& 93.21	   &  {\bf 93.52}          & 92.60	\\
  Global Cancer Map~\cite{ARFF:Datasets}
		& 14	  & 190	    & 16063		& 66.85	   &  {\bf 66.98}          & {\bf 66.98}	\\
  \hline 
 \end{tabular}
\end{table*}

\section{Experimental Results}
\label{sec_exp_results}

In this section, we compare the proposed algorithms with some popular
algorithms,  commonly used in practice. Table~\ref{tab:algo_complexity}
provides a summary of the proposed algorithms, as well as the existing
algorithms. We also introduce some notations for each algorithm, which
will be used in sequel to refer to the algorithm. We note that the
binary classification algorithm (Algorithm~\ref{alg:memd_ind}) is a
special case of the multi-class \emph{``one vs. all''} approach
(Algorithm~\ref{alg:onevsall_multi}),  and so, in sequel, we always
refer to both together as MeMd-J algorithm. 
\subsection{Summary of computational complexity of algorithms}
\label{sec_complexity}
We first discuss about the computational complexity of the various
algorithms. This is detailed in Table~\ref{tab:algo_complexity}.   It
shows the training and testing time complexities of the various
algorithms in terms of number of features ($d$), number of classes
($M$) and  number of training samples ($N$). For SVM, $S$ denotes the number of
support vectors, while for proposed algorithms $K$ is the chosen number of 
features (obtained from the ranking). The training time complexity  for SVM assumes that
the rows have been cached, as in the case of LIBSVM. 
SVM was implemented using
LIBSVM~\cite{ChangLin:2011:LIBSVM}, which uses a \emph{``one
  vs. one''} approach. Furthermore,  the number of iterations for SVM
tend to be $O(N^2)$, whereas for DME, the algorithm gives best results
when the number of iterations is small ($O(1)$) as mentioned
in~\cite{NigamLaffertyMcCallum:1999:UsingMaximumEntropyForTextClassification}. 

The greedy MeMd
technique~\cite{DukkipatiYadavMurty:2010:MaximumEntropyModelBasedClassification} 
is severely affected by curse of dimensionality. In addition to its
quadratic complexity, implementations indicated that the 2-moment ME
joint distribution (Gaussian) appeared to be unstable due to the
inverse of the large dimensional covariance matrix. Hence, results
corresponding to this algorithm have not been presented in the
comparisons. We also skip the MeMd algorithm using $JS-$divergence
since determination of true value of $JS-$divergence involves
considerable amount of computation as it requires numerical
calculation of integrals. Hence, we work with its approximate version
(Algorithm~\ref{alg:JSGM}).  It is worth noting here that though
MeMd-JS and MeMd-J have same order of complexity for parameter
estimation, for each class, MeMd-J builds a model  for all data not in
the class. Hence, it estimates twice the number of parameters. Hence,
using MeMd-JS over MeMd-J is computationally efficient in the
following cases. 
\begin{enumerate}
 \item 
 The number of points is large thereby making parameter estimation time for MeMd-J twice as MeMd-JS. 
 Note that feature selection phase does not depend on the number of points, hence, having an $M^2$ term in MeMd-JS is not a problem.
 \item
 The number of iterations required for parameter estimation is not O(1),
 \textit{i.e.}, when we use iterative scaling algorithm. 
 However, this issue is not there if only 1-moment or 2-moment models are considered. 
\end{enumerate}
\subsection{Experiments on biological datasets}
We compare the performance of our algorithms (MeMd-J and MeMd-JS)
with that of SVM on a variety of gene datasets, collected
from~\cite{ARFF:Datasets,MramorLeban:2007:VisualizationBasedCancerMicroarray}.
The details of the datasets are presented in Table~\ref{tab:gene},
where the number of features is same as the number of genes in these datasets.
We again mention here that the greedy
MeMd~\cite{DukkipatiYadavMurty:2010:MaximumEntropyModelBasedClassification} 
and MeMd using $JS-$divergence have not been implemented due to their 
computational complexities addressed in the previous section.

Table~\ref{tab:gene} also lists the accuracies obtained from different algorithms using 10-fold cross-validation,
where the folds are chosen randomly. The class conditional distributions are 2-moment ME distributions.
 The optimal number of features is selected by cross validation within the training sample.
 We note here that for binary classification,
the procedure in both MeMd-J and MeMd-JS are same, \textit{i.e.}, they result in the same algorithm.
Hence, in these cases, the results for both algorithms are combined together.

The best accuracy for each dataset is highlighted. 
From Table~\ref{tab:gene}, one can observe that the MeMd classifier is more successful in distinguishing among the
classes for most of the cases. This is because, a very small number of genes are actually involved in these diseases.
By using a feature pruning strategy in the classification algorithm, the MeMd classifier is able to prune away most of the
genes that are not discriminatory. 

\begin{table*}
 \centering 
 \caption{Performance comparison on 20 Newsgroup dataset ('/' used to separate classes; names abbreviated in some cases).}
 \label{tab:classify_20news}
 \begin{tabular}{|c||c|c|c||c|c|c|c|}
  \hline 
	      & \multicolumn{3}{c||}{Data attributes}	& \multicolumn{4}{c|}{2-fold cross validation accuracy} \\
  \cline{2-8}
    Experiment 	& No. of  & No. of  & No. of		& SVM      & DME	& MeMD-J     & MeMd-JS		\\
		& classes & samples & features		& (linear) &		& (1-moment) & (1-moment)	\\
  \hline \hline
  rec.autos / rec.motorcycles
		& 2	  & 1588    & 7777		& 95.96	   & 95.59	& \multicolumn{2}{c|}{\bf 97.35}	\\
  talk.religion / soc.religion
		& 2	  & 1227    & 8776		& 91.35	   & {\bf 92.33}& \multicolumn{2}{c|}{91.69}	\\
  comp.os.ms-windows / comp.windows.x
		& 2	  & 1973    & 9939		& 92.80	   & 93.60	& \multicolumn{2}{c|}{\bf 93.81}	\\
  comp.sys.ibm / comp.sys.mac
		& 2	  & 1945    & 6970		& 89.61    & {\bf 90.48}& \multicolumn{2}{c|}{89.77}	\\
  rec.(autos,motorcycles) / rec.sport
		& 2	  & 3581    & 13824		& 98.49	   & 98.68	& \multicolumn{2}{c|}{\bf 99.02}	\\
  sci.(space,crypt) / sci.(med,elec)
		& 2	  & 3952    & 17277		& 96.63	   & {\bf 96.93}& \multicolumn{2}{c|}{95.04}	\\
  comp.windows(os,x) / comp.sys.(ibm,mac)
		& 2	  & 3918    & 13306		& 88.23	   & {\bf 91.88}& \multicolumn{2}{c|}{\underline{91.75}}	\\
  \cline{7-8}
  all `talk' categories
 		& 4	  & 3253    & 17998		& 88.62    & 90.34    	& {\bf 91.91}  & 91.39	\\
  all `sci' categories
 		& 4	  & 3952    & 17275		& 94.63    & {\bf 95.26}   & \underline{95.14}   & 94.88	\\
  all `rec' categories
 		& 4	  & 3581    & 13822		& 95.83    & {\bf 96.23}   & \underline{96.14}   & 95.86	\\
  all `comp' categories
 		& 4	  & 4891    & 15929		& 81.08    & {\bf 83.41}   & 82.11   & 82.03	\\
  \hline 
 \end{tabular}
\end{table*}

\subsection{Experiments on text datasets}
\label{sec_exp_setup}
We perform our experiments on 
the 20 Newsgroups dataset obtained from~\cite{BacheLichman:2013:UCIMachineLearning}.
The dataset contains 20  classes, which have been grouped in different ways
to construct different binary and multiclass problems.
The data has been preprocessed prior to classification.
We remove all stop words, and words with frequency less than some cut-off value $(\gamma)$ from the 
entire document corpus. 
This is done since such words do not have much discriminative power.
The value of $\gamma$ is chosen to be 2 in all cases.
Each document $D_{j}$ of the text corpus is represented as a vector of 
term-weights $D_{j} = \langle W_{1j},W_{2j}, \ldots ,W_{Tj}\rangle $,  where $W_{ij}$
represents the normalized frequency of the word $w_{i}$ in the
document $D_{j}$~\cite{Sebastiani:2002:MachineLearningInAutomatedTextCategorization},
\textit{i.e.},
$ W_{ij} =\frac{N(w_{i},D_{j})}{\sum_{k = 1}^{T}{N(w_{k},D_{j}})}$,
where $N(w_{i},d_{j})$ is the number of times the word $w_{i}$ occurs in the document $D_{j}$,
and $T$ is the total number of words in the corpus. 
Thus $0 \leq W_{kj} \leq 1$ represents how much the word $w_{k}$ contributes to the semantics of the
document $D_{j}$. 

Table~\ref{tab:classify_20news}  presents the classification accuracy of 
the proposed methods (MeMd-J and MeMd-JS) along with SVM and DME using 2-fold cross-validation.
The main reason behind using only two folds is that, in such a case,
the training data size reduces considerably, and learning correctly from small data 
becomes an additional challenge. We demonstrate how the proposed generative approach
overcomes this problem.
Furthermore, MeMd-J and MEMd-JS algorithms use only the top $K$ ranked features 
for classification. For choosing the the optimal value of $K$, we employ the following strategy. 
We divide the training data 
in two portions: a considerable fraction (80\%) of the training data is used to rank
the features, while the remaining 20\% is used as a ``test'' set on 
which classification is performed using varying number of features. Hence, we obtain 
a plot of the accuracy vs. the number of features. We choose $K$ to be the 
minimum number of features, where the maximum accuracy is attained for that small portion of training data.
We also note, as in previous section, that for binary classification MeMd-J and MeMd-JS
are exactly same, and so we present their results together.

The best accuracies for each experiment is shown in bold. 
We also underline the cases, where results are very close to best case (more than -0.2\% of maximum accuracy).
Our observation can be summarized as follows:
In all of the cases (except one), MeMd outperforms linear SVM, which showed quite poor performance. 
We also implemented SVM with polynomial and RBF, which exhibited even poorer performance, 
and hence, those results are not presented.
On the other hand, we observed that DME performs quite well, 
providing the best accuracies in a considerable number of cases.
Further, in multiclass problems, MeMd-J is always observed to perform marginally better than MeMd-JS.
\section{Conclusion}
\label{conclusion}
As far as our knowledge, this is the first work that proposes
and studies a generative maximum entropy approach to classification.
In this paper, we proposed a method of classification using maximum
entropy with maximum discrimination (MeMd) which is a generative approach
with simultaneous feature selection.  
The proposed method is suitable for large dimensional text dataset as
the classifier is built in linear time with 
respect to the number of features and it provides a way to eliminate
redundant features. 
Also this is the first work that uses multi-distribution divergence in 
multiclass classification. It will also be interesting to study the proposed methods as feature selection algorithms.
\bibliographystyle{IEEEtran}
\bibliography{memdCameraConf}
\end{document}